\crefname{algocf}{alg.}{algs.}
\Crefname{algocf}{Algorithm}{Algorithms}
\theoremstyle{definition}
\newtheorem{definition}{Definition}[section]
\newtheorem{theorem}{Theorem}[section]
\newtheorem{lemma}{Lemma}[section]
\newtheorem*{remark}{Remark}
\newtheorem{example}{Example}[section]
\newcommand{\vcwdouble}[3]{
    \arrow[from=#1,to=#2,arrows, Rightarrow, double distance=.25mm, line width=.3mm, ->, shorten >=1pt, >=latex, "#3" {anchor=south west, yshift=1pt,xshift=2pt},  at end ] 
}
\begin{document}

\title{Unleashing Optimizations in Dynamic Circuits through Branch Expansion}

\author{Yanbin Chen}

\email{yanbin.chen@tum.de}
\orcid{0000-0002-1123-1432}
\affiliation{%
  \institution{Technical University of Munich}
  \city{Munich}
  \country{Germany}
}

\renewcommand{\shortauthors}{Yanbin Chen.}

\begin{abstract}
    Dynamic quantum circuits enable adaptive operations through intermediate measurements and classical feedback. Current transpilation toolchains, such as Qiskit and T$\ket{\text{ket}}$, however, fail to fully exploit branch-specific simplifications. In this work, we propose recursive branch expansion as a novel technique which systematically expands and refines conditional branches. Our method complements existing transpilers by creating additional opportunities for branch-specific simplifications without altering the overall circuit functionality. Using randomly generated circuits with varying patterns and scales, we demonstrate that our method consistently reduces the depth and gate count of execution paths of dynamic circuits. We also showcase the potential of our method to enable optimizations on error-corrected circuits.
\end{abstract}

\begin{CCSXML}
<ccs2012>
   <concept>
       <concept_id>10011007.10011006.10011041.10011049</concept_id>
       <concept_desc>Software and its engineering~Preprocessors</concept_desc>
       <concept_significance>500</concept_significance>
       </concept>
   <concept>
       <concept_id>10010583.10010786.10010813.10011726</concept_id>
       <concept_desc>Hardware~Quantum computation</concept_desc>
       <concept_significance>500</concept_significance>
       </concept>
   <concept>
       <concept_id>10010583.10010786.10010787</concept_id>
       <concept_desc>Hardware~Analysis and design of emerging devices and systems</concept_desc>
       <concept_significance>100</concept_significance>
       </concept>
 </ccs2012>
\end{CCSXML}

\ccsdesc[500]{Software and its engineering~Preprocessors}
\ccsdesc[500]{Hardware~Quantum computation}
\ccsdesc[100]{Hardware~Analysis and design of emerging devices and systems}

\keywords{Dynamic circuits, Quantum compilation, Circuit optimization}

\maketitle

\section{Introduction}\label{sec:intro}
Dynamic circuits or circuits with conditional branching are circuits with the ability to adapt operations based on intermediate measurement outcomes or classical control signals \cite{ibm_mid_circ_meas_available_2021, nation_ibm_howtomidcircmeas_2021, ibm_dynamic_circuit}.
This flexibility significantly enhances the expressiveness of quantum algorithms, allowing for adaptive strategies that can address challenges in error correction, optimization, and complex computational tasks \cite{briegel2009measurement, roffe2019quantum, lidar2013quantum, fedorov2022vqe, doi:10.1137/S0097539795293172, dong2022ground}. 
Numerous transpilation passes have been developed to optimize quantum circuits, and the continued evolution of tools like Qiskit, T$\ket{\text{Ket}}$, and PyZX is expected to introduce even more optimization techniques in the future \cite{Qiskit, Sivarajah_tket_2021, EPTCS318.14}.
However, we believe that the potential of existing transpilation passes for optimizing dynamic circuits has not been fully exploited, as illustrated in the following examples.

\begin{example}\label{ex:motivating-example-1}
Consider the quantum program in \cref{fig:ex-single-qubit-circuit}. This program constructs a single-qubit circuit, where an $X$ gate and a $Y$ gate are first applied, and then an expression \texttt{runtime\_expr} is evaluated at runtime: If \texttt{runtime\_expr} evaluates to $True$, then the $6$ gates in the if-branch are subsequently applied; otherwise the $6$ gates are not applied. In the end, a $Y$ gate and a $Z$ gate are applied. If we run the transpiler of Qiskit, one of the state-of-art quantum compilation toolchains, on the program \cref{fig:ex-single-qubit-circuit}, in particular with the specified native Clifford gate set $\{X, Y, Z, S, H, CX\}$ and with optimization level being $3$, which gives the most aggressive optimization to minimize gate count and circuit depth, we get an optimized program \cref{fig:qiskit-transpile-on-ex-circuit}. We observe that Qiskit only optimizes the circuit by canceling out the two Hadamard gates in the if-branch, but nothing more. However, if we re-structure the program into \cref{fig:partical-elim-on-ex-circuit}, one may easily notice that when \texttt{runtime\_expr} evaluates to $True$, the whole program is equivalent to an identity transformation, because every pair of $X$ gates, $Y$ gates, $Z$ gates, and Hadamard gates cancels out, and when \texttt{runtime\_expr} evaluates to $False$, the two $Y$ gates cancel out with each other.
Therefore, we can optimize the program \cref{fig:ex-single-qubit-circuit} into a highly simplified version \cref{fig:partical-elim-on-ex-circuit-further}.
Comparing the optimized program \cref{fig:qiskit-transpile-on-ex-circuit} by Qiskit with the optimized program \cref{fig:partical-elim-on-ex-circuit-further} by our analysis, we see that extra optimizations are achieved, because we take into account opportunities of circuit simplification that are only available when taking one of the conditional branches.
\end{example}

\begin{example}
The same experiment as in \cref{ex:motivating-example-1} is performed using T$\ket{\text{Ket}}$’s compilation passes and the same limitation is observed. The optimization passes in T$\ket{\text{Ket}}$ do not merge operations across the classically controlled blocks and the surrounding circuit, limiting its ability to simplify the program. This further highlights the need for optimization techniques that consider the interaction between conditional branches and the rest of the circuit.
\end{example}

The goal of this paper is to explore and demonstrate how existing transpilation techniques can be further extended to effectively optimize dynamic quantum circuits by using branch-specific simplifications. We aim to identify optimization opportunities that are currently underutilized in state-of-the-art quantum compilation toolchains, as highlighted in the example above.

\begin{figure}
    \centering
    \begin{verbatim}
    qc = quantum_circuit(qubit_num = 1)
    qc.x(0)
    qc.y(0)
    if(runtime_expr){
        qc.y(0)
        qc.x(0)
        qc.h(0)
        qc.h(0)
        qc.z(0)
        qc.y(0)
    }
    qc.y(0)
    qc.z(0)
    \end{verbatim}
    \caption{A circuit with one if-condition where the value of expression \texttt{runtim\_expr} can only be determined at runtime.}
    \label{fig:ex-single-qubit-circuit}
\end{figure}

\begin{figure}
    \centering
        \begin{verbatim}
    qc = quantum_circuit(qubit_num = 1)
    qc.x(0)
    qc.y(0)
    if(runtime_expr){
        qc.y(0)
        qc.x(0)
        qc.z(0)
        qc.y(0)
    }
    qc.y(0)
    qc.z(0)
    \end{verbatim}
    \caption{Resulting circuit by running Qiskit's transpiler on the program \cref{fig:ex-single-qubit-circuit}. Specifically, we run the command \texttt{transpile(qc, basis\_gates={'x', 'y', 'z', 's', 'h', 'cx'}, optimization\_level=3)}.}
    \label{fig:qiskit-transpile-on-ex-circuit}
\end{figure}

\begin{figure}
    \centering
    \begin{verbatim}
    qc = quantum_circuit(qubit_num = 1)
    if(runtime_expr){
        qc.x(0)
        qc.y(0)
        qc.y(0)
        qc.x(0)
        qc.h(0)
        qc.h(0)
        qc.z(0)
        qc.y(0)
        qc.y(0)
        qc.z(0)
    }
    else{
        qc.x(0)
        qc.y(0)
        qc.y(0)
        qc.z(0)
    }
    \end{verbatim}
    \caption{Re-structure of the program \cref{fig:ex-single-qubit-circuit}.}
    \label{fig:partical-elim-on-ex-circuit}
\end{figure}

\begin{figure}
    \centering
    \begin{verbatim}
    qc = quantum_circuit(qubit_num = 1)
    if(runtime_expr){
        
    }
    else{
        qc.x(0)
        qc.z(0)
    }
    \end{verbatim}
    \caption{The optimized quantum program equivalent to the program \cref{fig:ex-single-qubit-circuit}. It is achieved by canceling out each pair of Pauli gates and Hadamard gates in both branches in the program \cref{fig:partical-elim-on-ex-circuit}.}
    \label{fig:partical-elim-on-ex-circuit-further}
\end{figure}

\section{Preliminaries}\label{sec:prelim}
In this section, we focus solely on introducing the concepts and notations used in this paper. We avoid delving into a broader review of quantum computing fundamentals, which are assumed to be familiar to the reader. For readers seeking a comprehensive introduction to the principles and techniques of quantum computing, we recommend referring to the following literature: \cite{nielsen_QC_2012, rieffel2000introduction, kaye2006introduction}.

\paragraph{Sequential composition}For two circuits $C_1$ and $C_2$ acting on the same set of qubits, we use $C_2\cdot C_1$ or $C_2 C_1$ to denote the sequential composition of them, where $C_2$ is applied immediately after $C_1$.

\paragraph{Conditionals}Conditionals are essential constructs in quantum programs, enabling dynamic control flow based on runtime information. By testing a condition, a conditional decides between executing one of two sub-circuits. This mechanism is crucial for algorithms involving feedback, error correction, or adaptive operations, such as quantum error correction. 
In this paper, we use the following notation to denote conditionals in circuits.
\begin{definition}[Conditional]
    A conditional $cdt$ has two branches $cdt.i$ and $cdt.e$ and one condition $cdt.c$, as shown in \cref{fig:ex-cdt-text}.
    We use the graphical notation \cref{fig:ex-cdt} to express conditionals in circuit diagrams.
    When executing $cdt$, its condition $cdt.c$ is tested first: If $cdt.c$ evaluates to $True$, then sub-circuit $cdt.i$ gets subsequently executed; otherwise sub-circuit $cdt.e$ gets subsequently executed.
    \label{def:conditional}
\end{definition}

\begin{figure}
    \centering
    \begin{verbatim}
    if(cdt.c){
        cdt.i
    }
    else{
        cdt.e
    }   
    \end{verbatim}
    \caption{A typical conditional \texttt{cdt} in the quantum program.}
    \label{fig:ex-cdt-text}
\end{figure}

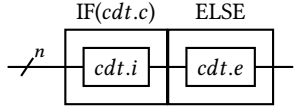
\begin{figure}
    \centering
    \begin{quantikz}
    &\qwbundle{n}&\gate{cdt.i}\gategroup[1,steps=1,style={inner
     sep=3.7pt}]{IF($cdt.c$)}&\gate{cdt.e}\gategroup[1,steps=1,style={inner
     sep=3.7pt}]{ELSE}&
    \end{quantikz}
    \caption{The graphical notation of the conditional $cdt$. }
    \label{fig:ex-cdt}
\end{figure}
\begin{remark}
    A conditional is allowed to have an if-branch without an else-branch. In this case, it has an implicit else-branch that only contains an identity transformation, and we omit its else-branch in our notation, as exemplified in the program \cref{fig:ex-single-qubit-circuit}.
\end{remark}

In dynamic circuits, conditionals can contain other conditionals as part of their branches, which allows for more complex execution paths where the evaluation of one conditional may lead to another conditional being evaluated within its branches. \Cref{def:nested-conditional} extends \cref{def:conditional} to account for this possibility.
\begin{definition}[Nested and nesting conditional] A nested conditional is a conditional that resides within the branch of another conditional. A nesting conditional is a conditional of which one branch contains another conditional.
\label{def:nested-conditional}
\end{definition}

The nesting depth of a conditional is determined by the most deeply nested conditional found within either the $True$- or $False$-branch of the conditional, as formally defined in \cref{def:nesting-depth}.

\begin{definition}[Nesting depth of conditionals]  For a conditional $cdt$ with branches $cdt.i$ and $cdt.e$, the nesting depth, denoted as $n\text{-}depth(cdt)$, is calculated as follows:

\[
\begin{cases} 
1, \qquad\qquad \text{if } cdt.i \text{ and } cdt.e \text{ contain no conditionals} \\
1 + \max\left(n\text{-}depth(cdt.i), n\text{-}depth(cdt.e) \right), \ \text{otherwise}
\end{cases}
\]
\label{def:nesting-depth}
\end{definition}

\Cref{def:size-nested-conditional} introduces the size of a conditional, which provides a measure of the structural complexity of a conditional based on the depth of its nesting. The size of a conditional is defined as the total number of execution paths or distinct branches it contains when accounting for all possible outcomes of its nesting conditionals.

\begin{definition}[Size of conditionals]  For a conditional with nesting depth $n$ (as defined in \cref{def:nesting-depth}), its size is $2^n$.
\label{def:size-nested-conditional}
\end{definition}

\begin{example}
    \cref{fig:ex-nested-conditional} illustrates a nesting conditional with nesting depth $2$ and size $2^2 = 4$.

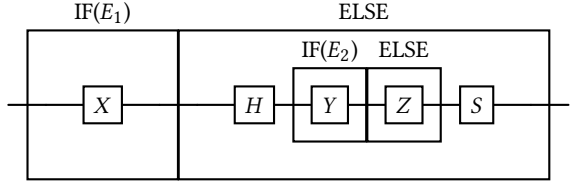
\begin{figure}
    \centering
    \begin{quantikz}
  &&\gate{X}\gategroup[1,steps=1,style={inner sep=18pt}]{IF($E_1$)}&&&\gate{H}\gategroup[1,steps=4,style={inner sep=18pt}]{ELSE}& \gate{Y}\gategroup[1,steps=1,style={inner sep=3.7pt}]{IF($E_2$)}&\gate{Z}\gategroup[1,steps=1,style={inner sep=3.7pt}]{ELSE}&\gate{S}&&
    \end{quantikz}
    \caption{A nesting conditional with nesting depth $2$, where $E_1$ and $E_2$ are expressions.}
    \label{fig:ex-nested-conditional}
\end{figure}
\end{example}

When working with quantum circuits that include conditionals, it becomes essential to evaluate their cost not just as a static structure, but in terms of the actual operations executed during runtime. This leads us to the concept of an execution path, as given in \cref{def:execution-path}, which captures the sequence of operations that are executed as a result of evaluating the conditional expressions in the circuit.

\begin{definition}[Execution path]
    An execution path of a quantum circuit is a specific sequence of operations that are executed when running the circuit, determined by the evaluation of all conditionals. 
    \label{def:execution-path}
\end{definition}

\begin{example}
    When running the program \cref{fig:qiskit-transpile-on-ex-circuit}, if $\texttt{runtime\_expr}$ evaluates to $False$, then the execution path will be $(\texttt{qc.x(0)},$ $ \texttt{qc.y(0)}, $ $\texttt{qc.y(0)}, \texttt{qc.z(0)})$; if $\texttt{runtime\_expr}$ evaluates to $True$, then the execution path will be $(\texttt{qc.x(0)},$ $\texttt{qc.y(0)},$ $\texttt{qc.y(0)},$ $\texttt{qc.x(0)},$ $\texttt{qc.z(0)}, $ $\texttt{qc.y(0)}, \texttt{qc.y(0)}, \texttt{qc.z(0)})$.
\end{example}

\begin{definition}[Size of program]
    Given a program $\mathcal{P}$, we denote its size by $|\mathcal{P}|$, which is given by the number of instructions in $\mathcal{P}$. 
\end{definition}

\begin{example}
    The size of program \cref{fig:ex-single-qubit-circuit} is $3 + 1 + 6 + 2 = 12$. In particular, there are $3$ instructions before the if-branch, $1$ instruction to check the $\texttt{runtime\_expr}$, $6$ instructions inside the if-branch, and $2$ instructions after the if-branch.
    The size of program \cref{fig:partical-elim-on-ex-circuit} is $1+1+10+4=16$, because there are $1$ instruction before the if-else-branch, $1$ instruction to test condition, $10$ instructions in the if-branch, and $4$ instructions in the else-branch. The size of the sub-program \cref{fig:ex-nested-conditional} is $1+1+2+1+1+1=7$, since in the outer conditional, there are $1$ instruction to test $E_1$, $1$ instruction in the if-branch, and $2$ instructions plus one conditional in the else-branch; in the inner conditional, there are $1$ instruction to test $E_2$, $1$ instruction in its if-branch, and $1$ instruction in the else-branch.
\end{example}

\section{Method}\label{sec:methods}
The core idea of our method is to bring sub-circuits that are outside of conditionals into each branch of the conditionals. By doing so, we aim to expose more opportunities for optimization. However, this approach is subject to a critical constraint: we cannot move a sub-circuit into the body of a conditional if the expression being tested by the conditional depends on that sub-circuit. So, it is necessary to formalize the notion of dependency between expressions and sub-circuits. Specifically, we must determine when an expression relies on the execution of certain sub-circuits to ensure correctness in our transformations. 
In \cref{def:dependency} and \cref{def:irr-dependency}, our definitions for the dependency of expressions on sub-circuits are presented.

\begin{definition}[Dependence of expressions on circuits]
    An expression $expr$ depends on a sub-circuit $C$, if and only if the evaluation of $expr$ is possible only after executing $C$.
    \label{def:dependency}
\end{definition}

\begin{definition}[Irreducible dependency]
    If an expression $expr$ depends on a sub-circuit $C$, and there exist \textbf{no} sub-circuits $C_a$ and $C_b$ such that $C = C_bC_a$ and $expr$ does not depend on $C_b$, then $expr$ irreducibly depends on $C$.
    \label{def:irr-dependency}
\end{definition}

\begin{example}
In the circuit \cref{fig:ex-expr-depend-on-circuit}, the expression $c[0]$ depends on the sub-circuit $C_1$, because the measurement in $C_1$ needs to be performed to determine the correct value in the classical register $c$; whereas $c[0]$ does not depend on the sub-circuit $C_2$. Additionally, the expression $c[0]$ depends on $C_3$, but it does not irreducibly depends on $C_3$, because $C_3 = C_2C_1$ and the $c[0]$ does not depend on $C_2$; the expression $c[0]$, however, does irreducibly depend on $C_1$.
\begin{figure}
    \centering
    \begin{quantikz}
     &&\gate{H}\gategroup[2,steps=2,style={dashed}]{$C_1$}\gategroup[2,steps=4,style={inner sep=17pt, dashed},label style={label position=below,anchor=north,yshift=-0.2cm}]{$C_3$}&\meter{}\vcwdouble{1-4}{2-4}{}&&\gate{Z}\gategroup[2,steps=1,style={dashed}]{$C_2$}&&&\gate{X}\gategroup[1,steps=1,style={inner
     sep=3.7pt}]{IF($c[0]$)}& \gate{Y}\gategroup[1,steps=1,style={inner
     sep=3.7pt}]{ELSE} & \\ \setwiretype{c}
    \lstick{$c$}&&&&&&&&&&
    \end{quantikz}
    \caption{Expression $c[0]$ depends on a $C_1$.}
    \label{fig:ex-expr-depend-on-circuit}
\end{figure}
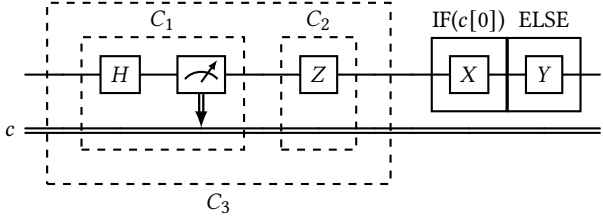
\end{example}

With the notions of dependency and irreducible dependency established, we are now equipped to address one of the key transformations in our method: branch expansion. The goal of branch expansion is to bring sub-circuits that are independent of the condition into both branches of a conditional. By doing so, we expose new opportunities for circuit optimization within each branch.
\Cref{theorem:branch-expansion} formally explains how branch expansion works under specific dependency constraints.

\begin{theorem}[Branch expansion]
For sub-circuits $C_0$, $C_1$, $C_2$, $C_{i}$, $C_{e}$ acting on the same set of qubits, and an expression $E$ that irreducibly depends on $C_0$ and does not depend on $C_1$ by \cref{def:dependency} and \cref{def:irr-dependency}, the following equivalence holds.
\begin{align*}
     &\begin{quantikz}
     &\qwbundle{n}&\gate{C_0}& \gate{C_1} &\gate{C_{i}}\gategroup[1,steps=1,style={inner
     sep=3.7pt}]{IF($E$)} &\gate{C_{e}}\gategroup[1,steps=1,style={inner
     sep=3.7pt}]{ELSE}&\gate{C_2}&
     \end{quantikz}
     \qquad =  \\
      &\begin{quantikz}[column sep = 0.4cm]
     &\qwbundle{n} &\gate{C_0}&\gate{C_1}\gategroup[1,steps=3,style={inner
     sep=2.2pt}]{IF($E$)} &\gate{C_{i}}&\gate{C_2}&\gate{C_1}\gategroup[1,steps=3,style={inner
     sep=2.2pt}]{ELSE}&\gate{C_{e}}&\gate{C_2}&
     \end{quantikz}
\end{align*}
\label{theorem:branch-expansion}
\end{theorem}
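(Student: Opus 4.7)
The plan is to prove the equivalence by showing that, for every runtime value of $E$, the LHS and the RHS induce exactly the same ordered sequence of gate applications on the $n$ qubits. I would decompose the transformation into two independent moves: first, pushing $C_2$ into each branch (semantically trivial, since $C_2$ is unconditionally executed immediately after whichever branch is taken), and second, hoisting $C_1$ past the evaluation point of $E$ (the non-trivial step that actually uses the dependency hypothesis). The $C_2$ move is a direct consequence of associativity of sequential composition together with the operational semantics of conditionals given in \cref{def:conditional}: replacing "execute branch, then apply $C_2$" by "append $C_2$ to each branch and execute" yields the same execution path regardless of $E$.

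For the $C_1$ move, the key claim I would establish is that evaluating $E$ at the program point immediately after $C_0$ yields the same value as evaluating it after $C_0 \cdot C_1$. By \cref{def:dependency}, the assumption that $E$ does not depend on $C_1$ means $E$ is already evaluable before $C_1$ is executed; combined with the irreducible dependence of $E$ on $C_0$ from \cref{def:irr-dependency}, this pinpoints $C_0$ as the unique prefix that $E$ genuinely requires, so interposing or omitting $C_1$ before the test cannot alter the outcome of the test. Consequently the conditional on $E$ may be performed right after $C_0$, and $C_1$ becomes the first operation common to both resulting branches.

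Once both moves are justified, I would finish by a case analysis on the runtime value of $E$: if $E$ evaluates to $\mathit{True}$, both the LHS and the RHS execute exactly $C_0 \cdot C_1 \cdot C_i \cdot C_2$; if $E$ evaluates to $\mathit{False}$, both execute $C_0 \cdot C_1 \cdot C_e \cdot C_2$. Since sequential composition is associative and the two circuits agree path-by-path, they implement the same quantum operation, which is the desired equivalence.

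The step I expect to be the main obstacle is turning the informal "$C_1$ does not affect $E$" into a rigorous justification within the paper's semantic framework, because the definition of dependency is phrased in terms of \emph{evaluability} rather than \emph{value-invariance}. I would handle this by arguing the contrapositive: if executing $C_1$ could change the value of $E$, then $E$ would need to read some classical bit produced by $C_1$ and therefore would not be evaluable before $C_1$ ran, contradicting the assumption that $E$ does not depend on $C_1$. Invoking the irreducibility of $E$'s dependence on $C_0$ then rules out the degenerate case where $C_1$ is somehow hidden inside $C_0$, ensuring the hoisting step is unambiguous.
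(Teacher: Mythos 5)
Your argument is correct and is essentially the paper's own proof spelled out in detail: the paper simply states that the equivalence follows directly from the definition of the conditional, which is exactly your case analysis on the runtime value of $E$ showing both sides execute $C_0\cdot C_1\cdot C_i\cdot C_2$ or $C_0\cdot C_1\cdot C_e\cdot C_2$. Your extra care in converting ``$E$ does not depend on $C_1$'' into value-invariance of the test is a reasonable elaboration of the same idea rather than a different route.
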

\begin{proof}
It follows directly the definition of the conditional.
\end{proof}

Since the basic idea of our method is to use \cref{theorem:branch-expansion} to maximally expand each conditional in the program, we design and implement a procedure, $rec\_branch\_expand$, of which the pseudo-code is shown in \cref{alg:optimization}. Crucially, the procedure runs recursively: if a conditional is nested within another, we proceed to expand every conditional within both branches of the outer conditional after expanding it. 
The key principle guiding the design of this procedure is to prioritize the expansion of conditionals at the outermost layer before expanding those nested deeper. 
This is because a nested conditional can only be expanded within the scope of its enclosing conditional—it cannot expand to include any code that lies outside the outer conditional. By expanding the outer conditionals first, we effectively maximize the range of code available for the subsequent expansion of inner conditionals, ensuring that their expansion potential is fully realized.

\paragraph{Controlling growth of program size with $depth\_limit$}
The parameter $depth\_limit$ is introduced to prevent uncontrolled growth in the size of the program during the branch expansion process: When a conditional is expanded, it is possible for other conditionals that were previously outside of the expanded conditional to become nested within it. This increases the nesting depth of these conditionals by \cref{def:nesting-depth}. Since the size of a conditional grows exponentially with its nesting depth by \cref{def:size-nested-conditional}, recursively expanding each conditional from outer layers to inner layers can lead to an exponential increase in program size. To mitigate this, we enforce an upper limit on the maximum nesting depth through the $depth\_limit$ parameter to ensure that the size of conditionals remains manageable in the expanded program.

\begin{algorithm}
\caption{Recursive maximum branch expansion}
\label{alg:optimization}
\KwData{$C \in dynamic \ circuits$, $depth\_limit$}
\KwResult{$rec\_branch\_expand(C,\ depth\_limit)$}
$C_{expanded} \gets C$, 
$d \gets depth\_limit$ \;
\For{$conditional \ cdt \in C_{expanded}$}{
    represent $C$ as $C_a \cdot cdt \cdot C_b$,\\ 
    where $cdt.c = expr$\;  
    \If{$expr$ does not depend on $C_a$}{
        $cdt.i \gets C_a \cdot (cdt.i) \cdot C_b$\;
        $cdt.e \gets C_a \cdot (cdt.e) \cdot C_b$\;
        
    }
    \Else{
        find sub-circuits $C_0$ and $C_1$, such that:\\
        \quad(a) $C_a = C_1\cdot C_0$, \\
        \quad(b) $expr$ irreducibly depends on $C_0$,\\
        \quad(c) $expr$ does not depend on $C_1$\;
        by \cref{theorem:branch-expansion}: \\
        \quad$cdt.i \gets C_1 \cdot (cdt.i) \cdot C_b$\;
        \quad$cdt.e \gets C_1 \cdot (cdt.e) \cdot C_b$\;
        \If{$d > 0$}{
            $cdt.i \gets rec\_branch\_expand (cdt.i, d-1) $\;
            $cdt.e \gets rec\_branch\_expand (cdt.e, d-1) $\;
        }
    }
}
\Return $C_{expanded}$ \;
\end{algorithm}

\begin{remark}
It is important to note that \cref{alg:optimization} is not an optimization pass in itself but rather a preprocessing step that enhances the effectiveness of existing optimization frameworks. By exposing hidden or latent opportunities for optimization in the circuit, our technique operates orthogonally to existing frameworks. This means that any optimization framework, regardless of its design, can benefit from incorporating our method as a preliminary step, ultimately leading to more effective circuit optimizations.
\end{remark}

\paragraph{Improved workflow of optimization}
With the procedure \cref{alg:optimization} explained, we can now conclude the explanation of how our method integrates into the broader optimization workflow. The approach is straightforward: consider any existing circuit optimization framework, such as the transpiler in Qiskit. By running \cref{alg:optimization} first, we prepare the circuit by expanding branches and exposing new opportunities for optimization. Once this preprocessing step is complete, we apply the chosen circuit optimizer to the expanded circuit.

\paragraph{Trade-off and asymptotic analysis} When applying our method \cref{alg:optimization} to a program, a key side effect is an increase in the program's size due to the expansion of conditionals. 
Since the runtime of any circuit transpiler, such as Qiskit and t$\ket{\text{ket}}$, depends on the size of the input program, this increase in program size leads to additional computational overhead.
The underlying principle here is a trade-off between runtime and optimization potential: by increasing $depth\_limit$ and enabling our method to expand deeper nested conditionals, we expose more opportunities for optimization that might otherwise remain hidden. 
The asymptotic bounds in \cref{theorem:program-size-growth} and \cref{theorem:time-complexity} quantify this trade-off.

\begin{lemma}\label{lemma:program-size-double}
    Given an input program $\mathcal{P}$ and a conditional $cdt$ in it, and  let $\mathcal{P}^{\prime}$ be the program obtained by performing branch expansion by \cref{theorem:branch-expansion} on $cdt$, then $|\mathcal{P}^{\prime}| \leq 2\cdot|\mathcal{P}|$.  
\end{lemma}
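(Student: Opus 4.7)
The plan is to reduce the claim to a purely local accounting of instructions around the conditional $cdt$, then observe that the branch expansion increases the program size by exactly $|C_1|+|C_2|$, which is bounded above by $|\mathcal{P}|$ itself. Since branch expansion only rewrites the local context $C_1 \cdot cdt \cdot C_2$ and leaves everything outside untouched (even when $cdt$ sits inside an arbitrarily deep nest of outer conditionals), it suffices to compare the size of this local context before and after the rewrite and add the contribution of the unchanged surroundings to both sides.

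First, I would fix notation following Theorem~\ref{theorem:branch-expansion}: write the context around $cdt$ as $C_0 \cdot C_1 \cdot cdt \cdot C_2$, where $C_0 \cdot C_1$ is everything preceding $cdt$ in its enclosing sequence, $C_2$ is everything following $cdt$, and the condition $E = cdt.c$ irreducibly depends on $C_0$ but not on $C_1$. Using the size definition already introduced in the preliminaries, the local contribution of $C_0 \cdot C_1 \cdot cdt \cdot C_2$ to $|\mathcal{P}|$ is
\[
|C_0| + |C_1| + 1 + |cdt.i| + |cdt.e| + |C_2|,
\]
where the $+1$ counts the conditional test itself. After applying Theorem~\ref{theorem:branch-expansion}, the corresponding local contribution to $|\mathcal{P}^{\prime}|$ becomes
\[
|C_0| + 1 + \bigl(|C_1| + |cdt.i| + |C_2|\bigr) + \bigl(|C_1| + |cdt.e| + |C_2|\bigr),
\]
so the net change equals $|C_1| + |C_2|$, reflecting exactly the one extra copy of $C_1$ and of $C_2$ that are pushed into each branch.

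Next I would conclude by bounding $|C_1| + |C_2|$. Because $C_1$ and $C_2$ are disjoint sub-circuits of $\mathcal{P}$ (they occupy distinct instruction slots in the program text), we have $|C_1| + |C_2| \leq |\mathcal{P}|$. Combining with the identity $|\mathcal{P}^{\prime}| = |\mathcal{P}| + |C_1| + |C_2|$ obtained above yields $|\mathcal{P}^{\prime}| \leq 2|\mathcal{P}|$, as required.

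The only delicate point I anticipate is making the local/global decomposition airtight when $cdt$ is nested inside one or several outer conditionals. I would handle this by a short structural induction on the position of $cdt$ in $\mathcal{P}$: the instructions outside the innermost sequence containing $cdt$ are untouched by the expansion, so their contribution cancels on both sides of $|\mathcal{P}^{\prime}| - |\mathcal{P}| = |C_1| + |C_2|$. Beyond this bookkeeping step, the argument is a direct calculation and should not require any further machinery.
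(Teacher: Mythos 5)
Your proposal is correct. The paper states \cref{lemma:program-size-double} without an explicit proof (it is treated as immediate and then used in \cref{theorem:program-size-growth}), and your instruction-counting argument is exactly the reasoning the paper leaves implicit: expansion duplicates only $C_1$ and the trailing sub-circuit, so $|\mathcal{P}^{\prime}| = |\mathcal{P}| + |C_1| + |C_2| \leq 2\,|\mathcal{P}|$, since $C_1$ and $C_2$ occupy disjoint instruction slots of $\mathcal{P}$. In fact your bookkeeping yields the sharper exact identity, which is a nice bonus; the only cosmetic point is that under the paper's composition convention ($C_2 C_1$ means $C_1$ first), writing the context as $C_0 \cdot C_1 \cdot cdt \cdot C_2$ with ``$C_0\cdot C_1$ preceding'' reverses the stated order, but this does not affect the count, and your remark that instructions outside the rewritten context cancel on both sides disposes of the nested-conditional case.
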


\begin{theorem}
Given an input program $\mathcal{P}$, in which there are $c$ conditionals, and $depth\_limit$ is set to $d$, then the size of the output $\mathcal{P^{\prime}}$ of \cref{alg:optimization} is bounded by $\mathcal{O}(c\cdot2^{\frac{d(d+1)}{2}}\cdot|\mathcal{P}|)$. 
    \label{theorem:program-size-growth}
\end{theorem}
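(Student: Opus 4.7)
The plan is to prove the bound by induction on the depth limit $d$, using \cref{lemma:program-size-double} as the atomic fact that a single branch expansion at most doubles the program size. Writing $g(d)$ for the worst-case blowup factor $|\mathcal{P}^{\prime}|/|\mathcal{P}|$ incurred by $rec\_branch\_expand(\mathcal{P}, d)$ on a program containing at most $c$ conditionals, the goal is to establish $g(d) = \mathcal{O}(c \cdot 2^{d(d+1)/2})$, from which the theorem follows by multiplying by $|\mathcal{P}|$.

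First I would unwind one invocation of \cref{alg:optimization}: the outer for-loop iterates over conditionals at the current layer of $C_{expanded}$, applies \cref{theorem:branch-expansion} to each (contributing one factor of $2$ per expansion by \cref{lemma:program-size-double}), and then recursively invokes $rec\_branch\_expand$ on $cdt.i$ and $cdt.e$ with depth limit $d - 1$. The base case $d = 0$ yields $g(0) = \mathcal{O}(c)$, since the topmost layer contains at most $c$ conditionals and each expansion is bounded individually by \cref{lemma:program-size-double}, with bookkeeping at a single layer ruling out multiplicative compounding beyond a linear-in-$c$ factor.

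For the inductive step, I would set up a recurrence of the shape $g(d) \leq 2^{d} \cdot g(d-1)$ by arguing that the expansions performed at the current layer contribute a factor bounded by $2^{d}$ (reflecting that up to $d$ expansions can be chained along the current sub-program before recursion takes over) and that the recursive calls on the two branches together contribute at most $g(d-1)$, exploiting $|cdt.i| + |cdt.e| \leq 2|\mathcal{P}|$ to combine the two sub-call blowups sub-additively. Unrolling from the base case gives $g(d) \leq \mathcal{O}(c) \cdot 2^{1 + 2 + \cdots + d} = \mathcal{O}(c \cdot 2^{d(d+1)/2})$, which is exactly the claimed bound.

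The main obstacle will be pinning down the number of expansions at each recursion level. Specifically, after a single branch expansion the sub-circuit $C_b$, and possibly $C_1$, is duplicated into both branches, so the two recursive calls no longer act on strictly disjoint portions of the circuit, and one must be careful not to double-count conditionals that appear in both branches. Handling this correctly---likely via a size potential that is sub-additive across the pair $(cdt.i, cdt.e)$ and that tracks nesting depth faithfully under the interleaving of original and duplicated conditionals---is the delicate accounting step that produces exactly the exponent $d(d+1)/2$ rather than something larger such as $2^{d}$ or $d \cdot c$.
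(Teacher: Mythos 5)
Your overall skeleton---anchor everything on \cref{lemma:program-size-double}, telescope per-level doubling factors into $2^{1+2+\cdots+d}=2^{\frac{d(d+1)}{2}}$, and pick up an extra factor of $c$---matches the arithmetic of the paper's proof, which multiplies a factor $2^k$ for the expansions performed at nesting depth $k$ and a factor $\mathcal{O}(c)$ for the number of such expansion cascades. But the two steps that make your recurrence true are exactly the ones you assert without argument, and the justifications you do give are not properties of \cref{alg:optimization}. For the per-level factor, you attribute $2^d$ to the \emph{current} (outermost) layer ``because up to $d$ expansions can be chained along the current sub-program before recursion takes over''; nothing in the algorithm bounds the number of same-level expansions by the depth limit---the number of conditionals at a layer is governed by $c$ and by duplication, not by $d$. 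In the paper the factor $2^k$ attached to nesting level $k$ has a different origin: expanding the conditional at level $k-1$ duplicates the next conditional into both of its branches, so level $k$ contains $2^k$ copies of the conditional expanded there, and expanding all of those copies costs at most a $2^k$ multiplicative blowup via \cref{lemma:program-size-double}. Your recurrence $g(d)\leq 2^d\cdot g(d-1)$ lands on the right product only because $\prod_{k=1}^{d}2^k$ reads the same top-down as bottom-up; the stated reason for the factor is reversed and unsupported.

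The same issue affects your base case and the placement of the factor $c$. You claim $g(0)=\mathcal{O}(c)$ on the grounds that ``bookkeeping at a single layer rules out multiplicative compounding,'' but that is precisely the delicate point: applying \cref{lemma:program-size-double} naively to $c$ successive top-level expansions gives a $2^c$ blowup, and excluding that requires the structural observation that after one expansion the remaining conditionals sit inside the newly expanded branches (hence are only reachable through the depth-limited recursion), or an explicit disjointness/potential argument---which you defer in your final paragraph rather than supply. In the paper the factor $c$ does not appear in a base case at all: it bounds the number of expansion cascades (at most one initiated per original conditional), each of which contributes at most $2^{\frac{d(d+1)}{2}}\cdot|\mathcal{P}|$. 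So as written, the proposal reproduces the target formula but establishes neither the per-level factor nor the linear-in-$c$ factor; the duplication and double-counting problem you flag at the end is not a side concern to be handled ``likely via a size potential''---it is the core counting argument of the theorem, and it is left open.
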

\begin{proof}
    Suppose conditionals $cdt_0, \dots, cdt_{d-1}$ are the $d$ conditionals in a program $\mathcal{P}_0$ that \cref{alg:optimization} expands at $d$ different nesting depths, where $\forall i\in\{1, \dots, d-1\}:$ $cdt_{i-1}$ is expanded before $cdt_{i}$. In the worst case, no two conditionals are nested within each other, and expanding $cdt_{i-1}$ makes both branches of $cdt_{i-1}$ expand to contain $cdt_{i}$. Then, at each nesting depth $k \in \{0, \dots, d\}$, the program contains $2^k$ copies of $cdt_k$. Suppose $\mathcal{P}_i$ is the program obtained after expanding every copy of $cdt_{i-1}$.
    Since, by \cref{lemma:program-size-double}, expanding a conditional at most doubles the size of the program, after expanding all copies of $cdt_k$, $|\mathcal{P}_{k+1}|$ is bounded by $2^k\cdot|\mathcal{P}_k|$. Hence $|\mathcal{P}_d|$ is bounded by $(\prod_{i=1}^{d}2^i)\cdot |\mathcal{P}_0| =2^{\sum_{i=1}^{d}i}\cdot|\mathcal{P}_0|=2^{\frac{d(d+1)}{2}}\cdot|\mathcal{P}_0|$.

    Since $\mathcal{P}$ has $c$ conditionals, the above procedure of expanding conditionals at $d$ different nesting depths is performed at most $\mathcal{O}(c)$ times. Therefore, the size of the output program is bounded by $\mathcal{O}(c\cdot2^{\frac{d(d+1)}{2}}\cdot|\mathcal{P}|)$.
\end{proof}

\begin{theorem}
Given an input program $\mathcal{P}$, where there are $c$ conditionals, and $depth\_limit$ is set to $d$, and assume the time complexity of performing the chosen circuit transpiler on $\mathcal{P}$ is $\mathcal{O}(T_{\mathcal{P}})$, then the time complexity of running the same circuit transpiler on $\mathcal{P^{\prime}}$, the output of \cref{alg:optimization} with input being $\mathcal{P^{\prime}}$, is $\mathcal{O}(T_{\mathcal{P}}\cdot c\cdot2^{\frac{d(d+1)}{2}}\cdot|\mathcal{P}|)$.
    \label{theorem:time-complexity}
\end{theorem}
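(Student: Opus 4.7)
The plan is to derive this theorem as a direct corollary of Theorem \ref{theorem:program-size-growth}, which has already done the combinatorial heavy lifting by bounding the size of the expanded program $\mathcal{P}'$. The first step I would take is to invoke that theorem to conclude $|\mathcal{P}'| = \mathcal{O}(c \cdot 2^{d(d+1)/2} \cdot |\mathcal{P}|)$, which quantifies exactly how much larger the preprocessed program can be relative to the original input.

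The second step is to translate this size bound into a runtime bound by relating the transpiler's cost on $\mathcal{P}'$ to its cost on $\mathcal{P}$. Since the transpiler's runtime is assumed to depend on the size of its input, I would argue that its cost on the larger input $\mathcal{P}'$ can be bounded above by $T_\mathcal{P}$ multiplied by the size of $\mathcal{P}'$: intuitively, replaying the same transpilation work over an input that is at most $|\mathcal{P}'|$ long gives a conservative upper bound of $T_\mathcal{P} \cdot |\mathcal{P}'|$. Substituting the size bound from the first step then gives the claimed $\mathcal{O}(T_\mathcal{P} \cdot c \cdot 2^{d(d+1)/2} \cdot |\mathcal{P}|)$ asymptotic directly.

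The main obstacle I expect is making precise the implicit scaling assumption by which $T_\mathcal{P}$ is carried over from input $\mathcal{P}$ to input $\mathcal{P}'$. A fully general transpiler could have arbitrary complexity as a function of input size, and in that setting the bound $T_\mathcal{P} \cdot |\mathcal{P}'|$ is not automatic; it is a conservative worst-case estimate appropriate for transpilers whose per-input cost scales at most linearly with input size, which matches the pattern-matching-based passes typically used in practice. I would resolve this by flagging the scaling hypothesis explicitly and noting that, under it, the combination of Theorem \ref{theorem:program-size-growth} with this cost model gives the stated bound mechanically. If a tighter or more general statement were desired, one would instead parameterize the transpiler by an explicit complexity function $f(n)$ and substitute $f(|\mathcal{P}'|)$ into the size bound, but the looser form stated here is sufficient to quantify the trade-off between $depth\_limit$ and optimization runtime that the paper advertises.
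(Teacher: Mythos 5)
Your proposal matches the paper's proof, which is simply the one-line observation that the bound follows directly from Theorem~\ref{theorem:program-size-growth}: you invoke the same size bound on $\mathcal{P}'$ and multiply by $T_{\mathcal{P}}$. Your explicit flagging of the implicit cost-model assumption (that the transpiler's runtime on the larger input can be bounded by $T_{\mathcal{P}}\cdot|\mathcal{P}'|$) is a reasonable clarification of a step the paper leaves unstated, not a departure from its argument.
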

\begin{proof}
    This follows directly \cref{theorem:program-size-growth}.
\end{proof}

\section{Evaluation}\label{sec:evaluation}
To establish the scope of our evaluation, we consider the stages of a quantum circuit's lifecycle in practical implementations. As illustrated by \cref{fig:circuit-lifecycle}, in fault-tolerant quantum computing (FTQC), circuits are typically encoded with quantum error correction (QEC) codes to protect against noise and errors. This encoding step transforms the circuit into a post-QEC circuit, where logical operations are performed on logical qubits encoded with multiple physical qubits. By contrast, pre-QEC circuits are the circuits prior to this encoding. Currently circuit optimizations are mainly performed on pre-QEC circuits, because QEC encoding transforms circuits into a much more complicated structure that is designed to counteract errors, making it more challenging to perform circuit-level optimization efficiently and effectively. Our method, however, could not only largely facilitate optimizations on pre-QEC dynamic circuits, as demonstrated in \cref{subsec:eval-pre-QEC}, but also could benefit optimizations on post-QEC circuits, as showcased in \cref{subsec:eval-post-QEC}.

\usetikzlibrary{shapes.geometric, arrows}
\tikzstyle{process} = [rectangle, rounded corners, minimum width=1cm, minimum height=0.8cm, text centered, draw=black, fill=blue!30]
\tikzstyle{arrow} = [thick,->,>=stealth]
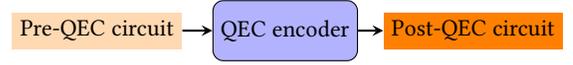
\begin{figure}[h]
    \centering
    \begin{tikzpicture}[node distance=2.5cm]

        \node [rectangle, draw=none, fill=orange!30](pre) {Pre-QEC circuit};
        \node (encoder) [process, right of=pre] {QEC encoder};
        \node (post) [rectangle, draw=none, fill=orange, right of=encoder] {Post-QEC circuit};

        \draw [arrow] (pre) -- (encoder);
        \draw [arrow] (encoder) -- (post);

    \end{tikzpicture}
    \caption{Stages of a circuit lifecycle in FTQC.}
    \label{fig:circuit-lifecycle}
\end{figure}

\subsection{Evaluation on Pre-QEC Circuits}\label{subsec:eval-pre-QEC}

\subsubsection{Experimental Setup}
To evaluate the performance of the proposed method, we conducted a series of experiments using randomly generated quantum circuits. The circuits are structured with a block-based template, as illustrated in \cref{fig:circuit-of-blocks}. In this template, each $B_i$ represents a $n$-qubit circuit block generated according to a specific pattern. By adjusting the number of blocks $k$, we control the size and scale of the circuits. Moreover, each block $B_{i+1}$ is made such that its expressions irreducibly depend (see \cref{def:irr-dependency}) on $B_i$, ensuring that the expansion process is well-defined and controlled.

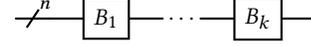
\begin{figure}
    \centering
    \begin{quantikz}[column sep = 0.45cm]
        &\qwbundle{n}&\gate{B_1}&\ \ldots\ &\gate{B_k}&
    \end{quantikz}
    \caption{Circuit template consisting of repetitive blocks of the same pattern. The number of blocks is $k$. All blocks are acting the same $n$-qubit set. }
    \label{fig:circuit-of-blocks}
\end{figure}

To analyze the performance of our method across a diverse range of circuit structures, we define two distinct block patterns:

\paragraph{Pattern 1: shallow conditionals}
The first pattern consists of blocks with shallow conditionals, where the nesting depth of each conditional is exactly 1. A block in this pattern is structured as shown in \cref{fig:circuit-pattern-one}. Here, $C_i$ is a sub-circuit that precedes the conditional branches, while $C_{a_i}$ and $C_{b_i}$ represent the sub-circuits executed in the "if" and "else" branches of the conditional, respectively. This pattern allows us to evaluate how our method performs in circuits with simple, non-nesting conditionals.
\begin{figure}
    \centering
    \begin{equation*}
        \begin{quantikz}
            &\qwbundle{n}&\gate{B_i}&
        \end{quantikz}
        \equiv
        \begin{quantikz}[column sep = 0.45cm]
        &\gate{C_i}&\gate{C_{a_i}}\gategroup[1,steps=1,style={inner
         sep=3.0pt}]{IF($E_i$)} &\gate{C_{b_i}}\gategroup[1,steps=1,style={inner
         sep=3.0pt}]{ELSE}&
        \end{quantikz}
    \end{equation*}
    \caption{Block pattern for circuits with shallow conditionals. $C_i$, $C_{a_i}$, $C_{b_i}$ are all randomly generated sub-circuits that are free of conditionals. The nesting depth of the block pattern is $1$. The circuit depth of $C_i$, $C_{a_i}$, $C_{b_i}$ are all fixed to be $d_s$.}
    \label{fig:circuit-pattern-one}
\end{figure}
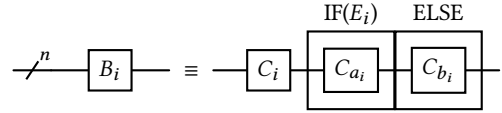

\paragraph{Pattern 2: recursively nesting conditionals} This pattern introduces recursively nesting conditionals. A block in this pattern is constructed recursively, as shown in \cref{fig:circuit-recursive-nested-pattern}. Each block $R_{i}[d]$ recursively contains blocks $R_i[d-1]$ within its branches. The nesting depth $d$ can be adjusted to control the depth of the conditionals in the circuit. One may notice that when $d = 1$, this pattern reduces to the pattern in \cref{fig:circuit-pattern-one}. This pattern enables us to evaluate the impact of our method on circuits with deeply nested conditionals.

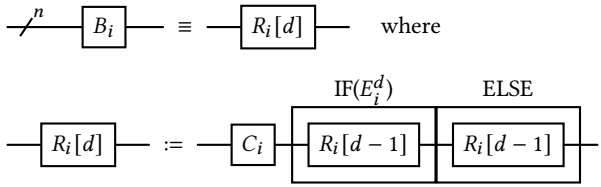
\begin{figure}
    \centering
    \begin{align*}
     &\begin{quantikz}
            &\qwbundle{n}&\gate{B_i}&
        \end{quantikz}
        \equiv
    \begin{quantikz}[column sep = 0.45cm]
        &\gate{R_{i}[d]}&
    \end{quantikz} \quad\text{where}
    \\
        &\begin{quantikz}[column sep = 0.45cm]
            &\gate{R_{i}[d]}&
        \end{quantikz}
        :=
        \begin{quantikz}[column sep = 0.45cm]
        &\gate{C_{i}}&\gate{R_i[d-1]}\gategroup[1,steps=1,style={inner
         sep=3.0pt}]{IF($E_i^{d}$)} &\gate{R_{i}[d-1]}\gategroup[1,steps=1,style={inner
         sep=3.0pt}]{ELSE}&
        \end{quantikz}
    \end{align*}

    \caption{Recursively nested pattern, where $d \ge 1$, $d$ is the nesting depth of the block pattern, $C_i$ is a random sub-circuit free of conditionals, and $R_i[0] = I$. The circuit depth of $C_i$ is fixed to be $d_s$.}
    \label{fig:circuit-recursive-nested-pattern}
\end{figure}

\subsubsection{Metrics}\label{subsubsec:metrics}
Before introducing the metrics used in our experiments, it is important to clarify why we should not use the whole circuit's gate count and circuit depth as metrics. While the gate count and depth of a circuit are widely accepted for evaluating the cost of the circuit, they fail to capture the structure of dynamic circuits with conditionals. Specifically, gate count and depth are typically calculated for the entire circuit, without accounting for the fact that, during execution, only one branch of a conditional is taken at a time. As a result, counting gates or depth across both branches simultaneously overestimates the cost of execution. Moreover, existing tools like Qiskit often treat a conditional block as a single gate or depth unit, completely disregarding the computational complexity of the branches within. This oversimplification makes such conventional metrics insufficient for evaluating the actual performance of dynamic circuits.

To address these limitations, we introduce the following metrics that are more aligned with the execution model of dynamic circuits. These metrics focus on evaluating the properties of execution paths (see \cref{def:execution-path}) within the circuit, such as the maximum and minimum depth or gate count across all paths. By tailoring our evaluation to the actual cost of executing specific paths, we provide a more accurate and meaningful comparison.

\begin{itemize}
    \item $max\text{-}p\text{-}depth$: The maximum number of layers of gates that must be executed sequentially along any execution path.
    \item $min\text{-}p\text{-}depth$: Similarly, the minimum depth along any execution path in the circuit.
    \item $max\text{-}p\text{-}gate\text{-}count$: The maximum number of gates along any execution path.
    \item $min\text{-}p\text{-}gate\text{-}count$: Similarly, the minimum gate count along any execution path in the circuit.
\end{itemize}

\begin{example}
    We use the circuit \cref{fig:ex-metrics} to demonstrate how the metrics are calculated. In the circuit, outside of the conditional, the gate count is $3$, and the depth is $2$. In the if-branch, the gate count is $1$, and the depth is $1$; in the else-branch, the gate count is $3$, and the depth is $2$.
    Then, in case the if-branch is executed, the total depth of the circuit is $2 + 1 = 3$, and the total gate count is $3 + 1 = 4$; in case the else-branch is executed, the total depth is $2 + 2 = 4$, and the total gate count is $3 + 3 = 6$. Therefore, $max\text{-}p\text{-}depth = max(3, 4) = 4$, $min\text{-}p\text{-}depth = min(3, 4) = 3$,
    $max\text{-}p\text{-}gate\text{-}count = max(4, 6) = 6$,
    $min\text{-}p\text{-}gate\text{-}count = min(4, 6) = 4$.

\begin{figure}
    \centering
    \begin{quantikz}
        & \gate{X}&\gate{Y}&\gate{H}\gategroup[2,steps=1,style={inner sep=3.7pt}]{IF($E$)}&\gate{H}\gategroup[2,steps=2,style={inner sep=3.7pt}]{ELSE}&\gate{X}&\\
        & \gate{H}&        &      &\gate{Z}&&     
    \end{quantikz}  
    \caption{An example $2$-qubit circuit with one conditional, where $E$ is an expression. This circuit is used to demonstrate how metrics are calculated.}
    \label{fig:ex-metrics}
\end{figure}
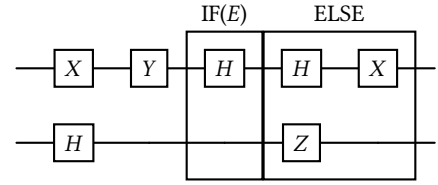
\end{example}

Intuitively, a lower value of these metrics indicates a more efficient circuit. By comparing these metrics before and after applying our method, we can evaluate its impact on circuit optimization.

\subsubsection{Experimental Procedure}
We set the qubit number $n = 3$ (see \cref{fig:circuit-of-blocks}) and set $d_s = 5$ (see \cref{fig:circuit-pattern-one} and \cref{fig:circuit-recursive-nested-pattern}). For each block pattern, we vary the number of blocks $k$ from $1$ to $100$. For each value of $k$, we generate $100$ random circuits based on the current block pattern. 
Then, the experiments are conducted under two settings:

\begin{itemize} 
    \item \textbf{Baseline optimization:} Apply Qiskit's transpiler with optimization level $3$ directly to the circuits and collect the metrics ($max\text{-}/min\text{-}p\text{-}depth/\text{-}gate\text{-}count$, see \cref{subsubsec:metrics}) for each of the $100$ random circuits. 
    \item \textbf{Preprocessed optimization:} First, our procedure \cref{alg:optimization} is applied to the circuits. Then the same transpiler with the same optimization level runs on these preprocessed circuits. Metrics are then collected for each of the $100$ random circuits. For circuits generated according to the pattern \cref{fig:circuit-pattern-one}, the parameter $depth\_limit$ in \cref{alg:optimization} is fixed at $1$ during preprocessing. However, for circuits generated according to the pattern \cref{fig:circuit-recursive-nested-pattern}, our method is evaluated under varying $depth\_limit$ values of $1$, $2$, and $3$. Metrics are collected separately for each $depth\_limit$ setting.

\end{itemize}
For both settings, the four metrics are calculated for each of the $100$ random circuits generated at the same set of parameters ($n$, $d_s$, $k$, and circuit pattern) and then averaged to provide a final result for each experiment.
By comparing the metrics for optimized circuits in these two settings, we can quantify the benefits of our method as a preprocessing step.
The random generation ensures diversity in circuit structure. The metrics we calculate are averaged across circuits to reduce the effect of outliers and provide robust results.

\subsubsection{Results}
\paragraph{Performance on pattern $1$}
\Cref{fig:perc-dec-depth} and \cref{fig:perc-dec-gate-count} present data collected from quantum circuits generated according to the block pattern \cref{fig:circuit-pattern-one}, i.e., circuits with shallow conditionals. These plots show the percentage reduction in the four metrics ($max\text{-}p\text{-}depth$, $min\text{-}p\text{-}depth$, $max\text{-}p\text{-}gate\text{-}count$, and $min\text{-}p\text{-}gate\text{-}count$) when our preprocessing method is applied before running Qiskit's transpiler, compared to running the transpiler directly. From \cref{fig:perc-dec-depth} and \cref{fig:perc-dec-gate-count}, the following analytic results are drawn:
\begin{itemize}
    \item \textbf{Effectiveness across metrics:} The consistent reduction in all metrics suggests that our method is effective in exposing optimization opportunities for Qiskit's transpiler.
    \item \textbf{Scalability:} The stabilization of percentage reductions across all metrics as the circuit size (i.e., the block number) increases demonstrates the scalability of our method.
    \item \textbf{Improvement in diverse execution paths:} Reductions in both $max\text{-}p$ and $min\text{-}p$ metrics imply that our method is beneficial not only for optimizations on the "best-case" execution paths but also for improving optimizations on the "worst-case" paths. 
\end{itemize}

\begin{figure}
\centering
\begin{tikzpicture}[>=latex,node distance=1.5em]
 \node(a)
 {
\scalebox{.6}{
    \input{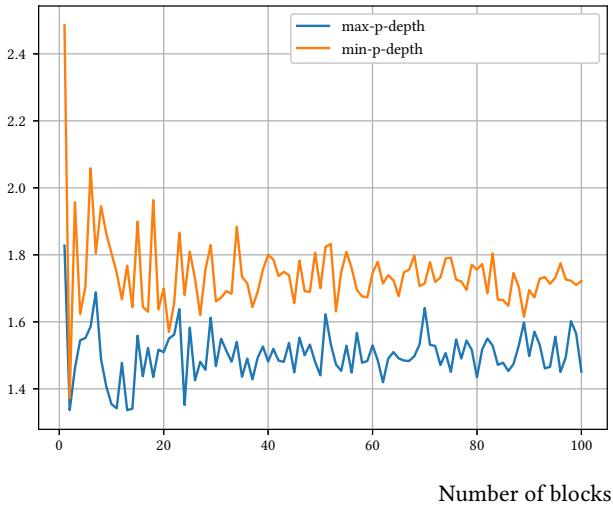}
}
 };

 \node[] at (-3.0,3.6)
 {\text{Percentage decrease}};
 \node[] at (2.9,-3.5) {\text{Number of blocks}};
\end{tikzpicture}
    \caption{The percentage decrease in metrics on depth
    after applying our method on circuits of pattern \cref{fig:circuit-pattern-one}.}
    \label{fig:perc-dec-depth}
\end{figure}

\begin{figure}
\centering
\begin{tikzpicture}[>=latex,node distance=1.5em]
 \node(a)
 {
\scalebox{.6}{
    \input{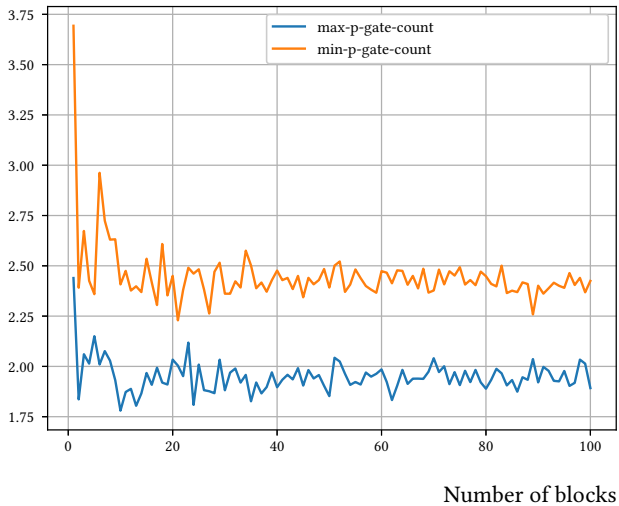}
}
 };
 \node[] at (-3.0,3.6)
 {\text{Percentage decrease}};
 \node[] at (2.9,-3.5) {\text{Number of blocks}};
\end{tikzpicture}
    \caption{The percentage decrease in metrics on gate count
    after applying our method on circuits of pattern \cref{fig:circuit-pattern-one}.}
    \label{fig:perc-dec-gate-count}
\end{figure}

\paragraph{Performance on pattern $2$}
\Cref{fig:perc-dec-depth-pattern-2} and \cref{fig:perc-dec-gate-count-pattern-2} present data collected from circuits generated according to the block pattern \cref{fig:circuit-recursive-nested-pattern}, i.e., circuits with relatively deep nesting conditionals, where we set the nesting depth $d = 4$. These plots show the percentage reduction in the four metrics when our preprocessing method is applied under different exploration depth $depth\_limit$ before running Qiskit's transpiler, compared to running the transpiler directly. From \cref{fig:perc-dec-depth-pattern-2} and \cref{fig:perc-dec-gate-count-pattern-2}, the following results are drawn:
\begin{itemize}
    \item \textbf{Effectiveness and scalability:} The performance of our method on circuits generated by pattern \cref{fig:circuit-recursive-nested-pattern} reconfirms its ability to consistently expose optimization opportunities and maintain stable improvements across all metrics, even as circuit size increases. This aligns with the conclusions drawn from its performance on circuits of pattern \cref{fig:circuit-pattern-one}.
    \item \textbf{Influence of $depth\_limit$:} Since increasing $depth\_limit$ improves the percentage decrease for all metrics, with $depth\_$ $limit = 3$ offering the best results, we conclude that our method is better at exposing optimization opportunities when running at a deeper exploration depth $depth\_limit$. 
\end{itemize}

\begin{figure}
\centering
\begin{tikzpicture}[>=latex,node distance=1.5em]
 \node(a)
 {
\scalebox{.6}{
    \input{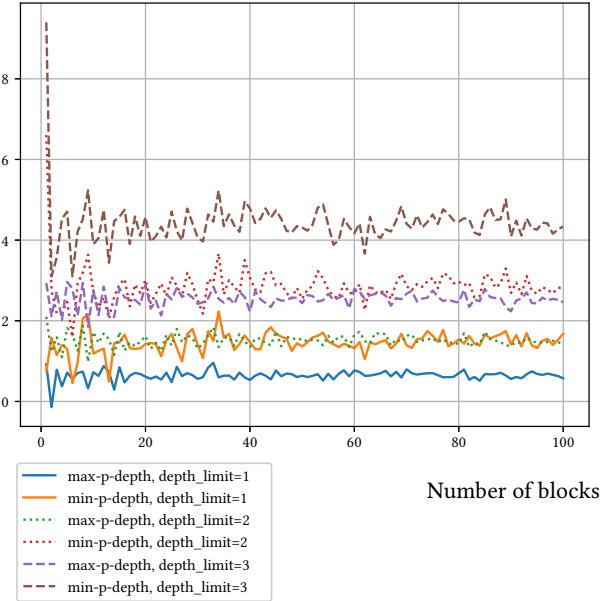}
}
 };

 \node[] at (-3.0,4.3)
 {\text{Percentage decrease}};
 \node[] at (2.9,-2.5) {\text{Number of blocks}};
\end{tikzpicture}
    \caption{The percentage decrease in 
    metrics on depth
    after applying our method on circuits of pattern \cref{fig:circuit-recursive-nested-pattern}.}
    \label{fig:perc-dec-depth-pattern-2}
\end{figure}

\begin{figure}
\centering
\begin{tikzpicture}[>=latex,node distance=1.5em]
 \node(a)
 {
\scalebox{.6}{
    \input{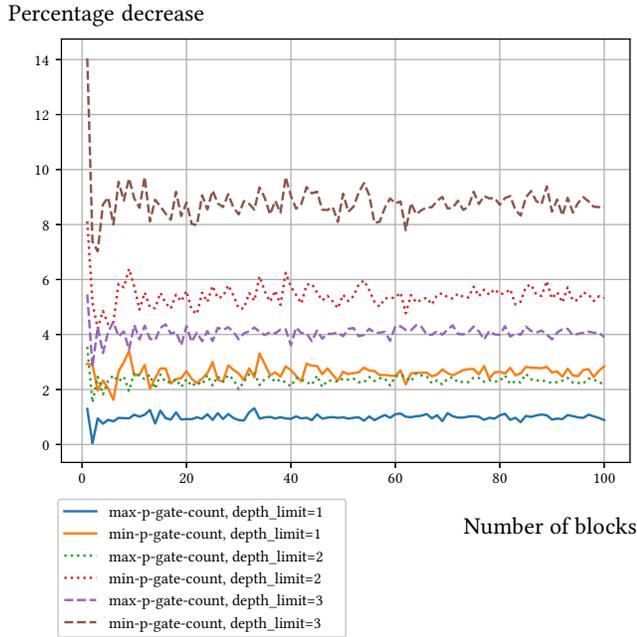}
}
 };

 \node[] at (-3.0,4.3)
 {\text{Percentage decrease}};
 \node[] at (2.9,-2.5) {\text{Number of blocks}};
\end{tikzpicture}
    \caption{The percentage decrease in 
    metrics on gate count
    after applying our method on circuits of pattern \cref{fig:circuit-recursive-nested-pattern}.}
    \label{fig:perc-dec-gate-count-pattern-2}
\end{figure}

\paragraph{Discussion}
The results across block patterns demonstrate the consistent effectiveness and scalability of our method in facilitating circuit optimization. For both patterns \cref{fig:circuit-pattern-one} and \cref{fig:circuit-recursive-nested-pattern} circuits, our method achieves stable reductions in execution path depth and gate count, showcasing its robustness even as circuit size increases. For pattern \cref{fig:circuit-recursive-nested-pattern} circuits, the $depth\_limit$ parameter in our procedure allows us to control the trade-off between the strength of our exploration and the computational overhead, with deeper exploration yielding more spots for optimization. The consistent performance improvements across diverse circuit structures demonstrate our method is adaptable to different circuit patterns. Collectively, these findings highlight the potential of our approach to enhance the optimizations of existing transpilation frameworks, such as Qiskit’s, by systematically uncovering hidden optimization opportunities.

\subsection{Optimization in Fault-Tolerant Circuits}\label{subsec:eval-post-QEC}

While there have been many works on optimizing encoding schemes or the physical implementation of QEC-encoders to produce more efficient fault-tolerant circuits \cite{nautrup2019optimizing, burkard1999physical}, optimization on circuits after they are encoded by a QEC scheme remains relatively unexplored.
Our method, however, easily presents opportunities to improve post-QEC circuits by uncovering branch-specific simplifications. 

\paragraph{Construction of post-QEC circuits}
The construction of a circuit encoded with any QEC scheme is also a sequence of repetitive blocks acting on the same $n$-qubit system, as shown in \cref{fig:circuit-of-blocks}. 
Each block $B_i$ comprises the following components, as depicted in \cref{fig:QEC-circuit}:
\begin{itemize}
    \item Logical Gate Application ($L_i$): Implements a logical operation on the encoded qubits.
    \item Syndrome Extraction ($S_i$): Performs measurements on ancilla qubits to reveal a syndrome pattern associated with specific errors that might have occurred in the system. These measurements extract information about the error without collapsing the encoded logical state.
    \item Error Correction ($C_i$): Based on the syndrome information extracted, the error correction operation counteracts the detected error by applying a corresponding corrective transformation to the system.
\end{itemize}
This modular construction allows the circuit to mitigate errors while maintaining the logical operations required for computation, making it a general framework for FTQC.

\begin{figure}
    \centering
    \begin{align*}
    \begin{quantikz}
            &\qwbundle{n}&\gate{B_i}&
    \end{quantikz}
    \equiv 
    \begin{quantikz}
        &\qwbundle{n}&\gate{L_i}&\gate{S_i}\vcwdouble{1-4}{2-4}{}&\gate{C_i}\wire[d][1]{c}& \\ 
        \setwiretype{c}
        &&&&\phase{}& 
    \end{quantikz}
    \end{align*}
    \caption{The block pattern for a QEC-encoded circuit.}
    \label{fig:QEC-circuit}
\end{figure}
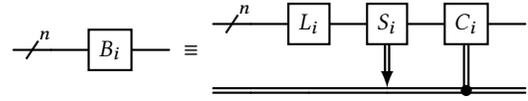

\paragraph{Potential of optimization revealed by our method}
In a post-QEC circuit, the error correction operation $C_i$ can be interpreted as a set of conditionals, of which each branch corresponds to a specific syndrome pattern. In each branch, a specific corrective action is applied to counteract the identified error. By expanding these conditionals, our method reveals new optimization opportunities that are otherwise obscured in the original program structure. Notably, the logical operation $L_{i+1}$ in the subsequent block is independent of $C_i$, meaning that when expanding the conditionals within $C_i$, $L_{i+1}$ will be included in the expanded branches. This enables branch-specific optimizations where $C_i$ and  $L_{i+1}$ may combine in a way that reduces redundant operations. See the following example.
\begin{example}
    Consider Shor's $9$-qubit code. If, in the block $B_i$, the corrective action determined by $C_i$ is to apply an $X$ gate on the $j$-th qubit, and the subsequent logical operation $L_{i+1}$ is a logical $Z$ gate, which is supposedly implemented by applying $X$ gates to all $9$ qubits, then the $X$ gates on the $j$-th qubit cancel out.
\end{example}

\section{Related works}\label{sec:rel_works}
Dynamic circuits optimization has been a critical area of research \cite{Karuppasamy_2025}, with significant contributions from state-of-the-art toolchains such as Qiskit, T$\ket{\text{Ket}}$, and PyZX \cite{Qiskit, Sivarajah_tket_2021, EPTCS318.14}. These frameworks provide various transpilation techniques to reduce circuit depth, gate count, and hardware constraints, enabling efficient execution on quantum devices. However, their focus has predominantly been on static circuits, with limited attention given to the unique optimization challenges posed by dynamic circuits. Recent efforts have started to explore optimization techniques for hybrid quantum-classical workflows and circuits with measurements \cite{10821341, Borgna_2021, fang2023dynamicquantumcircuitcompilation}. Other tasks, such as 
 circuit verification, when involving dynamic circuits, pose greater challenges compared to static circuits, which makes the study of dynamic circuits an active research field \cite{burgholzer2021towards}.

\section{Conclusion}\label{sec:conclusion}
We presented a method to enhance the optimization of dynamic quantum circuits by expanding nested conditionals, exposing opportunities that are otherwise hidden. 
Experimental results demonstrated consistent reductions in gate count and depth in execution paths across diverse circuit patterns and also validated the practical benefits of the trade-off between the optimization potential and the runtime overhead controlled by $depth\_limit$. 
Its capability to facilitate optimizations in error-corrected circuits is also showcased.

\begin{acks}
I thank the reviewers for their efforts and their helpful suggestions. I am grateful to my supervisor Prof. Dr. Helmut Seidl for many fruitful discussions and his support at all times.
The research is part of the Munich Quantum Valley (MQV), which is supported by the Bavarian state government with funds from the Hightech Agenda Bayern Plus. 
\end{acks}

\bibliographystyle{ACM-Reference-Format}

\end{document}